\definecolor{cornellred}{RGB}{179,27,27} 
\definecolor{cornellblue}{RGB}{00,00,170}
\definecolor{cornellgrey}{RGB}{96,94,92}
\newtheoremstyle{myplain}
  {9pt}
  {9pt}
  {\itshape}
  {\parindent}
  {\scshape}
  {:}
  {.5em}
  {}
\newtheoremstyle{mydefinition}
  {9pt}
  {9pt}
  {\itshape}
  {\parindent}
  {\scshape}
  {:}
  {.5em}
  {}
\newtheoremstyle{myremark}
  {9pt}
  {9pt}
  {}
  {\parindent}
  {\scshape}
  {:}
  {.5em}
  {}
\theoremstyle{myplain}
\newtheorem{proposition}{Proposition}
\theoremstyle{mydefinition}
\newtheorem{assumption}{Assumption}
\theoremstyle{myremark}
\newtheorem{remark}{Remark}[section]
\renewcommand{\cite}{\citet}
\def\centerarc[#1](#2)(#3:#4:#5){ \draw[#1] ($(#2)+({#5*cos(#3)},{#5*sin(#3)})$) arc (#3:#4:#5);}
\numberwithin{equation}{section}
\begin{document}

\title{Bounding Infection Prevalence \\ by Bounding Selectivity and Accuracy of Tests: \\ With Application to Early COVID-19\thanks{Thanks to Chuck Manski and Francesca Molinari for stimulating discussions and for sharing their data, to Dan Sacks, Coady Wing, and  Gabriel Ziegler for feedback and literature pointers, and to Cynthia Stoye for special support in crazy times. Any and all errors are mine.}}
\date{\today}
\author{J\"{o}rg Stoye\thanks{Department of Economics, Cornell University, stoye@cornell.edu.}}

\maketitle
\begin{abstract}
I propose novel partial identification bounds on infection prevalence from information on test rate and test yield. The approach utilizes user-specified bounds on (i) test accuracy and (ii) the extent to which tests are targeted, formalized as restriction on the effect of true infection status on the odds ratio of getting tested and thereby embeddable in logit specifications. The motivating application is to the COVID-19 pandemic but the strategy may also be useful elsewhere.

Evaluated on data from the pandemic's early stage, even the weakest of the novel bounds are reasonably informative. Notably, and in contrast to speculations that were widely reported at the time, they place the infection fatality rate for Italy well above the one of influenza by mid-April.
\end{abstract}

\vfill

\pagebreak
\onehalfspacing

\section{Introduction}
\label{sec:introduction}

Prevalence of a novel infection like SARS-CoV-2 (the virus causing COVID-19 disease) is a quintessential missing data problem. Only a small subset of the population has been tested, this subset is almost certainly selective, we do not even know the accuracy of tests, and our understanding of the pandemic is vague enough so that we might not want to overly rely on heavily parameterized models. This is a natural application for partial identification analysis, i.e. the analysis of bounds on parameter values that can be inferred from imperfect data and weak but credible assumptions, without forcing statistical identifiability of a model.\footnote{See \cite{Manski03} for an early monograph and \cite{MolinariHOE} for an extensive survey.} The present paper proposes a general framework for analyzing partial identification of disease prevalence, assuming that one has partially identifying information on the \textit{selectivity} and \textit{sensitivity} of diagnostic tests.

The obvious precedent for this is \cite[][MM henceforth]{MM20}. I agree with that paper's overall thrust but propose a considerably different implementation, refining worst-case bounds by bounding test sensitivity and selectivity but not predictive value (all terms will be defined later). These restriction are readily related to other literatures, and --unlike with predictive values-- nonvacuous prior bounds on them can be asserted without implying informative prior bounds on prevalence itself. In the empirical application, bounds that only restrict the direction of selectivity are considerably more informative than the analogous bounds emphasized in MM, and yet I will argue that assumptions became more compelling. The difference matters: The novel bounds on the infection fatality rate exclude ``flu-like" values, which were the subject of speculation at the time, at a rather early stage. They become much tighter, though at the cost of reduced credibility, if one substantively restricts selectivity.

\section{The Identification Problem}

\subsection{Basic Setting and Worst-Case Bounds}

Consider first the problem of bounding prevalence of an infection in a stylized example where one has observed test rate and test yield for one population. I will call the disease COVID-19 henceforth but the ideas are obviously more general. For readability, I also follow common parlance and loosely refer to COVID-19, though I strictly speaking investigate SARS-CoV-2 infection as opposed to COVID-19 disease.  

Thus, let $C$ indicate true infection status (with $C=1$ indicating infection), $T$ test status (with $T=1$ indicating having been tested), and $R$ test result (with $R=1$ a positive test result; we observe $R$ only conditionally on $T=1$). In particular, define the testing rate $\tau:=\Pr(T=1)$ and the test yield $\gamma:=\Pr(R=1|T=1)$. These objects are directly identified from the data, and we initially assume that they are known; indeed, inference will turn out to be a secondary concern. We also maintain the assumption that (PCR-)tests have specificity (=true negative rate $\Pr(R=0|T=1,C=0)$) of $1$; thus, $\Pr(C=1|R=1)=1$. Generalizing away from this simplification would be straightforward.

Worst-case bounds on the true infection rate can then be derived from the Law of Total Probability and the logical bound of $[0,1]$ on any unknown probability. In particular, write
\begin{equation*}
\Pr(C=1) = \Pr(C=1|R=1)\Pr(R=1) + \Pr(C=1|R=0)\Pr(R=0)
\end{equation*}
and observe that $\Pr(C=1|R=0) \in [0,1]$, whereas $\Pr(R=1)=\gamma\tau$ and $\Pr(C=1|R=1)=1$ by maintained assumption. Thus, without any further assumption,
\begin{equation}
\Pr(C=1) \in [\gamma\tau, 1]. \label{eq:worst_case}
\end{equation}
These bounds go back to \cite{Manski89} in spirit and are also the starting point of MM. I next lay out novel ways to refine them.

\subsection{Introducing Bounds on Sensitivity and Selectivity of Tests}
Consider injecting prior information on \textit{test sensitivity} (i.e., true positive rate $\Pr(R=1|C=1)$) and on \textit{test selectivity} (i.e., the relation of $\Pr(T=1|C=1)$ to $\Pr(T=1|C=0)$ but not either of these two probabilites by itself). I do not claim that any of these are context-independent, much less known; hence, the prior information will itself be in the form of bounds. However, test sensitivity relates directly to a large medical literature, and test selectivity is readily related to statistical models of binary response. I next explain the approach and work out its implications.

\paragraph{Refinement: Allow for measurement error through bounding sensitivity.} Test sensitivity is the target parameter in much research on COVID-19 (and, of course, more generally). Thus, consider:
\begin{assumption} \label{as:sens}
Sensitivity of the test is bounded by
\begin{equation}
\Pr(R=1|C=1,T=1) =: \pi \in [\underline{\pi},\overline{\pi}]. \label{eq:ass_sens}
\end{equation}
\end{assumption}
The assumption takes a notational shortcut: It seemingly implies that $\pi$ is constant across true prevalence and test rates. This textbook view \citep[][chapter 2]{textbook} has been challenged \citep{LBI08}. In the specific case of COVID-19, both prevalence and testing rate might influence sensitivity through the distribution of viral load among the tested and infected; the corresponding conjecture that asymptomatic surveillance might be characterized by lower sensitivity than symptomatic surveillance has some empirical corroboration \citep{Mohammadi20,Zhang21}. However, bounds derived below do not exploit constancy of $\pi$ beyond the fact that $\pi \in [\underline{\pi},\overline{\pi}]$. So this is best thought of as a shortcut to avoid further subscripts, though users should keep the consideration in mind when specifying $(\underline{\pi},\overline{\pi})$.
 
The effect of Assumption \ref{as:sens} on prevalence bounds is easily calculated.
\begin{proposition} \label{prop:1}
Suppose Assumption \ref{as:sens} holds. Then prevalence is sharply bounded by
\begin{equation}
\rho \in [\gamma\tau/\overline{\pi}, \gamma\tau/\underline{\pi}+1-\tau].
\end{equation}
\end{proposition}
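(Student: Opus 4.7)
The plan is to decompose $\rho = \Pr(C=1,T=1) + \Pr(C=1,T=0)$, pin down the first term as a function of $\pi$ using perfect specificity, and use the logical $[0,1-\tau]$ range of the second term together with the assumed range of $\pi$ to extremize $\rho$. Then I will verify sharpness by constructing attaining distributions.

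First I would argue that perfect specificity implies $\Pr(R=1,T=1,C=0) = 0$, so by the Law of Total Probability
\begin{equation*}
\gamma\tau \;=\; \Pr(R=1,T=1) \;=\; \Pr(R=1\mid T=1,C=1)\,\Pr(C=1,T=1) \;=\; \pi\,\Pr(C=1,T=1),
\end{equation*}
which gives $\Pr(C=1,T=1) = \gamma\tau/\pi$. This already requires $\pi \geq \gamma\tau$ for the expression to be a probability; I would note that $\underline{\pi}$ should be read as $\max\{\underline{\pi},\gamma\tau\}$ when necessary, which is the implicit convention. Second, $\Pr(C=1,T=0)$ is only restricted to lie in $[0, 1-\tau]$, since the data are completely uninformative about infection among the untested. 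Adding the two pieces,
\begin{equation*}
\rho \;=\; \frac{\gamma\tau}{\pi} + \Pr(C=1,T=0) \;\in\; \left[\frac{\gamma\tau}{\pi},\;\frac{\gamma\tau}{\pi} + 1 - \tau\right].
\end{equation*}
Finally I would minimize the lower endpoint and maximize the upper endpoint over $\pi \in [\underline{\pi},\overline{\pi}]$; since $\gamma\tau/\pi$ is monotone decreasing in $\pi$, this yields $[\gamma\tau/\overline{\pi},\, \gamma\tau/\underline{\pi} + 1-\tau]$.

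The main remaining obstacle is sharpness: I must show that every value in this interval is the prevalence of some joint law of $(C,T,R)$ consistent with the observables $(\tau,\gamma)$, perfect specificity, and Assumption \ref{as:sens}. For this I would construct, for each $\rho^{\ast}$ in the claimed interval, an explicit distribution by (i) choosing $\pi \in [\underline{\pi},\overline{\pi}]$ so that $\gamma\tau/\pi \leq \rho^{\ast} \leq \gamma\tau/\pi + 1-\tau$ (feasibility on the boundary picks $\pi = \overline{\pi}$ for the lower endpoint and $\pi = \underline{\pi}$ for the upper endpoint, with intermediate values obtained by continuity), (ii) setting $\Pr(C=1,T=1) = \gamma\tau/\pi$ and $\Pr(C=1,T=0) = \rho^{\ast} - \gamma\tau/\pi$, both of which lie in $[0,\tau]$ and $[0,1-\tau]$ respectively, and (iii) filling in the remaining cells by $\Pr(R=1,T=1,C=1) = \gamma\tau$, $\Pr(R=0,T=1,C=1) = \gamma\tau(1-\pi)/\pi$, and $\Pr(R=0,T=1,C=0) = \tau - \gamma\tau/\pi$. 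A quick check confirms all cell probabilities are nonnegative and sum to one, that the marginals reproduce $\tau$ and $\gamma$, that specificity equals one, and that sensitivity equals the chosen $\pi$. This delivers sharpness and completes the proof.
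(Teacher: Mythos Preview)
Your argument is essentially the paper's own: decompose prevalence via the Law of Total Probability over $T$, use perfect specificity to pin down $\Pr(C=1\mid T=1)=\gamma/\pi$ (equivalently $\Pr(C=1,T=1)=\gamma\tau/\pi$), leave $\Pr(C=1\mid T=0)$ free in $[0,1]$, and extremize over $\pi\in[\underline{\pi},\overline{\pi}]$. Your explicit sharpness construction goes beyond what the paper spells out; one small slip is that the feasibility constraint should read $\pi\geq\gamma$ (so that $\Pr(C=1,T=1)=\gamma\tau/\pi\leq\tau$), not $\pi\geq\gamma\tau$, and accordingly the implicit truncation of $\underline{\pi}$ is at $\gamma$ rather than $\gamma\tau$.
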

\begin{proof}
Write
\begin{eqnarray}
\Pr(C=1) = \Pr(C=1|T=1)\Pr(T=1) + \Pr(C=1|T=0)\Pr(T=0). \label{eq:sens_pf1}
\end{eqnarray}
While no informative bound on $\Pr(C=1|T=0)$ is available, we have
\begin{eqnarray*}
&& \Pr(R=1|T=1) \\
&=&\Pr(R=1|C=1,T=1)\Pr(C=1|T=1)+\underset{=0\text{ by assumption}}{\underbrace{\Pr(R=1|C=0,T=1)}}\Pr(C=0|T=1) \\
 &=&\underset{=\pi}{\underbrace{\Pr(R=1|T=1,C=1)}} \Pr(C=1|T=1),
\end{eqnarray*}
implying (in the notation introduced above) that $\Pr(C=1|T=1)=\gamma/\pi \in [\gamma/\overline{\pi},\gamma/\underline{\pi}]$. The bounds follow by substituting into \eqref{eq:sens_pf1}. 
\end{proof}
\begin{remark}
This result is easily extended to allow for specificity (=true negative rate $\Pr(R=0|C=0,T=1)$) to differ from $1$. Indeed, the bounds simply adjust prevalence in the tested population through the well-known formula ``prevalence=(yield+specificity-1)/(sensitivity+specificity-1)" and leave prevalence in the untested population unconstrained. This is not worked out to economize on notation.
\end{remark}

\paragraph{Refinement: A ``logit bound" on test selectivity.} Consider also the following:
\begin{assumption} \label{as:select}
The factor $\kappa$ in
$$ \frac{\Pr(T=1|C=1)}{1-\Pr(T=1|C=1)} = \kappa \frac{\Pr(T=1|C=0)}{1-\Pr(T=1|C=0)}$$
can be bounded as $\kappa \in [\underline{\kappa},\overline{\kappa}]$.
\end{assumption}
Assumption \ref{as:select} resembles sensitivity analysis for treatment effects in \citet{Rosenbaum02}. It bounds the relative odds ratio of being tested between true positives and true negatives. Of course, this is only one of many possible ways to constrain how targeted tests are. However, it is easily related to standard models of selection as binary response. In particular, bounding $\kappa$ in the above is equivalent to bounding it in the logit model
\begin{equation*}
\Pr(T=1|C=c)=\frac{\exp(\alpha+\kappa c)}{1+\exp(\alpha+\kappa c)}.
\end{equation*}
Logit models are well understood in econometrics and medical statistics, so this connection generates an interface to natural estimation strategies and maybe researcher intuitions about plausible parameter values.

For example, \citet{Canning20} model the age-dependent effect of COVID-19 syptoms on social distancing behavior through a logit; a similar model could in principle be used to model self-selection into symptomatic surveillance. If such a model were applied to the propensity to get tested, true infection status could be treated as hidden covariate. If one were furthermore willing to bound the coefficient on this covariate --where the bounds may depend on the values of other covariates-- then conditionally on any realization of observed covariates, one is in the setting of Assumption \ref{as:select}. In a setting of symptomatic surveillance, the propensity of noninfected subjects to get tested could, for example, relate to the age-specific frequency of influenza-like symptoms.

The selectivity factor $\kappa$ could be bounded from both above and below. For this paper's application, I will impose throughout that $\kappa \geq 1$, thus there is at least weak selection of infected subjects into testing, and I will consider values of $\kappa$ that force strict selection. Bounding selectivity from above, or also allowing for a lower bound below $1$, may be interesting in other contexts, for example, if getting tested is stigmatized or tests are targeted but not at the at-risk population.

The implications of bounding $\kappa$ are slightly more involved.
\begin{proposition} \label{prop:2}
Suppose that Assumptions \ref{as:sens} and \ref{as:select} hold. Then prevalence is sharply bounded by
\begin{equation}
\Pr(C=1) \in \left[\frac{\gamma}{\overline{\pi}} \times \frac{\overline{\pi} + (\overline{\kappa}-1)\tau(\overline{\pi}-\gamma)}{\overline{\kappa}(\overline{\pi}-\gamma)+\gamma},\frac{\gamma}{\underline{\pi}} \times \frac{\underline{\pi} + (\underline{\kappa}-1)\tau(\underline{\pi}-\gamma)}{\underline{\kappa}(\underline{\pi}-\gamma)+\gamma}\right]. 
\end{equation}
including by the corresponding limiting expressions as $\underline{\kappa} \to -\infty$ or $\overline{\kappa} \to \infty$. In particular, if $\overline{\kappa}=\infty$ as in the empirical application, we have 
\begin{equation}
Pr(C=1) \in \left[\frac{\tau \gamma}{\overline{\pi}},\frac{\gamma}{\underline{\pi}} \times \frac{\underline{\pi} + (\underline{\kappa}-1)\tau(\underline{\pi}-\gamma)}{\underline{\kappa}(\underline{\pi}-\gamma)+\gamma}\right]. \label{eq:bounds}
\end{equation}
\end{proposition}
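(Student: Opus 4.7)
The plan is to reduce the sharp identified set under Assumptions \ref{as:sens} and \ref{as:select} to the range of an explicit function $\rho(\kappa,\pi)$ on the rectangle $[\underline{\kappa},\overline{\kappa}] \times [\underline{\pi},\overline{\pi}]$. For fixed $(\kappa,\pi)$, I start from the identity $\Pr(C=1|T=1) = \gamma/\pi$ established in the proof of Proposition \ref{prop:1} and invert via Bayes' rule to obtain
\begin{equation*}
\Pr(T=1|C=1) = \frac{\gamma\tau}{\pi\rho}, \qquad \Pr(T=1|C=0) = \frac{(\pi-\gamma)\tau}{\pi(1-\rho)},
\end{equation*}
where $\rho := \Pr(C=1)$. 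Substituting into the odds-ratio equation of Assumption \ref{as:select} and clearing denominators yields a single linear equation in $\rho$, whose solution is
\begin{equation*}
\rho(\kappa,\pi) = \frac{\gamma}{\pi} \cdot \frac{\pi + (\kappa-1)\tau(\pi-\gamma)}{\kappa(\pi-\gamma)+\gamma}.
\end{equation*}

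I would next establish that $\rho(\kappa,\pi)$ is jointly nonincreasing on the rectangle, under the maintained restrictions $\kappa \geq 1$, $\tau \leq 1$, and $\pi \geq \gamma$ (the last forced by $\gamma/\pi = \Pr(C=1|T=1) \leq 1$). Letting $h(\kappa,\pi)$ denote the second factor in the displayed formula, direct differentiation and some cancellation give
\begin{equation*}
\frac{\partial h}{\partial \kappa} = \frac{-\pi(\pi-\gamma)(1-\tau)}{[\kappa(\pi-\gamma)+\gamma]^2}, \qquad \frac{\partial h}{\partial \pi} = \frac{-(\kappa-1)\gamma(1-\tau)}{[\kappa(\pi-\gamma)+\gamma]^2},
\end{equation*}
both nonpositive; together with the decreasing prefactor $\gamma/\pi$, this implies $\partial \rho/\partial \kappa \leq 0$ and $\partial \rho/\partial \pi \leq 0$. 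Consequently, the minimum of $\rho$ over the rectangle is attained at $(\overline{\kappa},\overline{\pi})$ and the maximum at $(\underline{\kappa},\underline{\pi})$, producing the displayed endpoints. The limit $\overline{\kappa} \to \infty$ is immediate by dividing numerator and denominator of the fraction by $\kappa$, which collapses the lower endpoint to $\gamma\tau/\overline{\pi}$ as stated.

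For sharpness I would confirm that every value in the claimed interval is consistent with some admissible joint distribution of $(C,T,R)$. Continuity of $\rho(\kappa,\pi)$ on the compact rectangle makes its image the full interval between its extrema. For each $(\kappa,\pi)$, the auxiliary probabilities above lie in $[0,1]$: substituting the closed form, the required inequalities $\rho \geq \gamma\tau/\pi$ and $1-\rho \geq (\pi-\gamma)\tau/\pi$ reduce to nonnegative multiples of $(1-\tau)$. Setting $\Pr(R=1|T=1,C=1)=\pi$ and invoking specificity one, the constructed distribution recovers $\Pr(T=1)=\tau$ and $\Pr(R=1|T=1)=\gamma$ by direct calculation, while satisfying the odds-ratio equation by design.

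The main obstacle is algebraic: reducing the odds-ratio equation to the explicit closed form for $\rho(\kappa,\pi)$, and then verifying the clean sign patterns of the two partial derivatives, both of which rely on fortuitous cancellations. Once the formula and its monotonicity are in hand, the corner optimization, the limiting case, and the sharpness construction are mechanical.
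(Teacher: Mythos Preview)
Your proposal is correct and follows essentially the same route as the paper: derive the closed form $\rho(\kappa,\pi)$ from the odds-ratio constraint combined with $\Pr(C=1|T=1)=\gamma/\pi$, then establish monotonicity in $(\kappa,\pi)$ to pin the extrema at the rectangle's corners. Your version is in fact more complete, supplying the explicit partial derivatives (which the paper leaves as ``by taking derivatives'') and an explicit sharpness construction that the paper omits; the only minor organizational difference is that the paper eliminates $\tau_0$ via the odds ratio and then imposes $\tau=\rho\tau_1+(1-\rho)\tau_0$, whereas you express both $\tau_1$ and $\tau_0$ via Bayes and impose the odds-ratio equation directly.
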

\begin{proof}
To keep algebra transparent, introduce new notation $\tau_c\equiv\Pr(T=1|C=c)$. Recall also the notation $\rho=\Pr(C=1)$. Write
\begin{eqnarray}
\gamma = \frac{\Pr(R=1,T=1)}{\Pr(T=1)} = \frac{\rho \tau_1 \pi}{\tau} \implies \tau_1=\frac{\gamma \tau}{\rho \pi}. \label{eq:tau_1}
\end{eqnarray}
Substituting
\begin{eqnarray*}
\frac{\Pr(T=1|C=1)}{1-\Pr(T=1|C=1)} = \kappa \frac{\Pr(T=1|C=0)}{1-\Pr(T=1|C=0)} \implies \tau_0 = \frac{\tau_1}{\tau_1+\kappa(1-\tau_1)}
\end{eqnarray*}
into the accounting identity $\tau=\rho \tau_1 + (1-\rho) \tau_0$ yields
\begin{eqnarray*}
\tau = \rho \tau_1 + (1-\rho)\frac{\tau_1}{\tau_1+\kappa(1-\tau_1)}.
\end{eqnarray*}
Substituting for $\tau_1$ from \eqref{eq:tau_1} yields the following algebra:
\begin{eqnarray*}
&& \tau = \frac{\gamma\tau}{\pi} + (1-\rho)\frac{\frac{\gamma\tau}{\rho\pi}}{\frac{\gamma\tau}{\rho\pi}+\kappa\left(1-\frac{\gamma\tau}{\rho\pi}\right)} \\
&\Longleftrightarrow & \pi = \gamma+(1-\rho)\frac{\gamma\pi}{\kappa\rho\pi-(\kappa-1)\gamma\tau} \\
&\Longleftrightarrow & (\kappa\rho\pi-(\kappa-1)\gamma\tau)(\pi-\gamma) = (1-\rho)\gamma\pi \\
&\Longleftrightarrow & \rho = \frac{\gamma}{\pi} \times \frac{\pi+(\kappa-1)\tau(\pi-\gamma)}{\kappa(\pi-\gamma)+\gamma}.
\end{eqnarray*}
By taking derivatives, one can verify that the r.h. fraction in the last expression, and therefore the entire expression, decreases in both $\pi$ and $\kappa$. Bounds follow by evaluating it at $(\pi,\kappa)=(\underline{\pi},\underline{\kappa})$ respectively $(\pi,\kappa)=(\overline{\pi},\overline{\kappa})$.
\end{proof}
The bounds effectively multiply sample prevalence by an adjustment factor that reflects test selectivity. As would be expected, the implied prevalence decreases in selectivity $\kappa$ and sensitivity $\pi$. Note also (again as expected) that the adjustment factor simplifies to $\rho=\gamma/\pi$ at $\kappa=1$ (no selectivity would mean we estimate prevalence by prevalence in the tested subpopulation), to $\rho \to \tau \gamma/\pi$ as $\kappa \to \infty$ (perfect targeting means we impute zero prevalence in the untested population; compare \eqref{eq:bounds}) and also, for the record, $\rho \to 1-\tau+\tau \gamma/\pi$ as $\kappa \to 0$ (perfectly wrong targeting means we impute complete prevalence in the untested population).

\begin{remark}
While I present their cumulative impact, Assumptions \ref{prop:1} and \ref{prop:2} are in principle easily separated: The first one restricts the relation between test yield and prevalence in the tested population, the second one restricts prevalence across tested and untested populations. Readers are encouraged to ``pick and choose" and, of course, also to propose other approaches. For example, sensitivity adjustment could be combined with MM's suggestion to restrict the rate of asymptomatic infections.
\end{remark}

\subsection{Bounds on the Negative Predictive Value}
The negative predictive value $\text{NPV}=\Pr(C=0|R=0,T=1)$ is the probability that a negative test result is accurate. It is of great importance in medical decision making \citep{EB20,Manski20,Watson20}. It can be bounded as follows:
\begin{proposition} \label{prop:3}
Suppose Assumption \ref{as:sens} holds. Then sharp bounds on the NPV are given by
\begin{equation}
\text{NPV} \in \left[  \frac{1-\gamma/\overline{\pi}}{1-\gamma} , \frac{1-\gamma/\underline{\pi}}{1-\gamma}  \right]. \label{eq:eta}
\end{equation}
\end{proposition}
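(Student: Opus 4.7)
The plan is to compute NPV as a ratio, express it as an explicit function of $\pi$ alone, and then invoke monotonicity in $\pi$ to obtain the bounds, with sharpness following because each $\pi \in [\underline{\pi},\overline{\pi}]$ is attainable.

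First, I would exploit the maintained assumption that specificity equals one. This means $\Pr(R=1 \mid C=0, T=1)=0$, so conditionally on $T=1$ the events $\{R=1\}$ and $\{C=0\}$ are disjoint. In particular,
\begin{equation*}
\Pr(C=0, R=0 \mid T=1) = \Pr(C=0 \mid T=1) = 1 - \Pr(C=1 \mid T=1).
\end{equation*}
From the proof of Proposition \ref{prop:1}, we already have $\Pr(C=1 \mid T=1) = \gamma/\pi$, so $\Pr(C=0,R=0 \mid T=1) = 1 - \gamma/\pi$. The denominator in the NPV is simply $\Pr(R=0 \mid T=1) = 1-\gamma$, which is directly identified. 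Dividing,
\begin{equation*}
\text{NPV} = \frac{1 - \gamma/\pi}{1-\gamma}.
\end{equation*}

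Next, I would observe that this expression is strictly increasing in $\pi$ (since $-\gamma/\pi$ is increasing in $\pi$ and $1-\gamma$ is a positive constant given $\gamma<1$). Hence on the interval $[\underline{\pi},\overline{\pi}]$ the extrema are attained at the endpoints, giving the claimed bounds \eqref{eq:eta}.

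Finally, for sharpness, I would note that every $\pi \in [\underline{\pi},\overline{\pi}]$ is consistent with the observed $(\tau,\gamma)$ and Assumption \ref{as:sens}: one can always construct a joint distribution of $(C,R,T)$ with the given $\tau,\gamma$, any prescribed $\pi$ in the allowed range, and specificity one, so the full interval of NPV values is traced out. There is no real obstacle here; the only subtlety is being careful that specificity~$=1$ makes $\{R=0\}\supseteq\{C=0\}$ given $T=1$, which is what makes the NPV depend only on $\pi$ and the observables, with no role for the untested subpopulation or for selectivity $\kappa$.
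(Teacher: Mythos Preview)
Your derivation is essentially the paper's own: both compute $\text{NPV}=(1-\gamma/\pi)/(1-\gamma)$ from specificity $=1$ and $\Pr(C=1\mid T=1)=\gamma/\pi$, then appeal to monotonicity in $\pi$. You additionally spell out a sharpness construction, which the paper leaves implicit.

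One substantive point, though: you correctly argue that $(1-\gamma/\pi)/(1-\gamma)$ is \emph{increasing} in $\pi$, whereas the paper's proof asserts it ``obviously decreases.'' Your direction is the right one (higher sensitivity means fewer false negatives among those testing negative, hence higher NPV), and the paper's own Table~\ref{tb:1} confirms it numerically. But then you cannot simply say this ``gives the claimed bounds \eqref{eq:eta}'' as written, because the endpoints in \eqref{eq:eta} are ordered as if the map were decreasing: the displayed lower bound uses $\overline{\pi}$ and the upper bound uses $\underline{\pi}$. Your monotonicity argument actually shows the interval should read $\left[\frac{1-\gamma/\underline{\pi}}{1-\gamma},\,\frac{1-\gamma/\overline{\pi}}{1-\gamma}\right]$. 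You should flag that swap rather than endorse the stated display.
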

\begin{proof}
Also for later use, denote the NPV as $\eta := \Pr(C=0|R=0,T=1)$, then
\begin{eqnarray*}
\eta &=& \frac{\Pr(C=0,R=0|T=1)}{\Pr(C=0,R=0|T=1)+\Pr(C=1,R=0|T=1)} \\
&=& \frac{1-\Pr(C=1|T=1)}{1-\Pr(C=1|T=1)+(1-\pi)\Pr(C=1|T=1)} \\
&=& \frac{1-\gamma/\pi}{1-\gamma/\pi+(1-\pi)\gamma/\pi} = \frac{1-\gamma/\pi}{1-\gamma},
\end{eqnarray*}
where $\gamma=\pi \Pr(C=1|T=1)$ was used. This obviously decreases in $\pi$.
\end{proof}
The result is easy but will be used momentarily. Note that the expression derived in the proof has an easy intuition: The numerator is the fraction of true negatives, i.e. adjusting yield by sensitivity, whereas the denominator is the proportion of measured negatives, in the tested population. The result could again be easily generalized to also allow for specificity of less than $1$. In that case, there would also be nondegenerate bounds on the positive predictive value $\Pr(C=1|R=1,T=1)$, which equals $1$ here because of the assumption of perfect specificity.

\subsection{Comparison to Bounds that Start from NPV}\label{sec:NPV}
Assumption \ref{as:sens} contrasts with MM's strategy of inputting ex ante bounds on the NPV. In each case, bounds on the respective other quantity become an output of the model, so the direction of logical inference is reversed. Notating the input bounds as $\eta \in [\underline{\eta},\overline{\eta}]$, MM establish that
\begin{equation}
\Pr(C=1) \in [\tau (\gamma + (1-\gamma)\underline{\eta}),\gamma + (1-\gamma)\overline{\eta}]. \label{eq:npv_bounds}
\end{equation}
I will add to this the observation that in conjunction with empirical test yield, prior bounds on NPV restrict test sensitivity. Specifically, simple algebra building on Proposition \ref{prop:3} yields
\begin{equation}
\pi \in \left[\frac{\gamma}{1-(1-\gamma)\overline{\eta}},\frac{\gamma}{1-(1-\gamma)\underline{\eta}}\right]. \label{eq:bound_sens}
\end{equation}
The following are some methodological considerations as to why one might want to rather start from sensitivity and selectivity.
\begin{itemize}
\item By bounding the NPV, one necessarily directly restricts prevalence in the tested population. This is because $\Pr(C=1|T=1)=\gamma + (1-\gamma)(1-\eta)$, so the lower and upper bound on $ \Pr(C=1|T=1)$ necessarily exceed the corresponding bound on $(1-\eta)$. Since also the upper bound on overall prevalence is just the upper bound on prevalence in the tested population, the effect can be large.

In addition, \eqref{eq:bound_sens} reveals that bounds on NPV do (in conjunction with test yield, which is observable) imply bounds on sensitivity. But because these are not made explicit, an opportunity to check empirical plausbility of assumptions is missed.

To see how all of this can play out, consider MM's prior bounds of $[.6,.9]$ on the NPV. With this input, no data can move the upper bound on prevalence below $.4$ and no data --including a test yield of $0$-- can reduce the lower bound to $0$. For example, if test yield is $.1$, then prevalence in the tested population is bounded by $[.19,.46]$; the upper bound of $.46$ also applies to overall prevalence; and test sensitivity is restricted to $[.22,.53]$, far below any plausible range of values. This example is stark but not hypothetical; compare the first entry in Table 2 in MM, replicated in the first line of Table \ref{tb:1} below. About half of the upper bounds in that table are below $.5$, so it is important to understand that they cannot be below $.4$ by construction.\footnote{That bounds on NPV presuppose bounds on prevalence is clearly expressed in \cite{Manski20}, who reverses the direction of logical inference and bounds the NPV of serological tests by inputting MM's prevalence bounds which, in turn, inputted assumed (albeit for PCR tests) NPV bounds.}

In contrast, prior bounds on sensitivity do not directly restrict prevalence. This does not mean that plausible bounds on the former are necessarily independent of the latter; see the discussion after Assumption \ref{as:sens}. However, one can assert bounds that are plausible over a wide range of prevalences, and not directly restricting the latter matters. Indeed, because any value of NPV strictly below $1$ bounds prevalence away from $0$ by assumption, I submit that in the early stages of a pandemic, any such bound runs the risk of injecting ``incredible certitude" \citep{Manski11}. The above, implied bounds on sensitivity are arguably a case in point.

\item Inputting sensitivity (and possibly specificity) generates an interface with the literature on diagnostic tests because it is the focus of this literature. For a general example, see Table 1 in \cite{PS17}. With regard to COVID-19, practitioners' guides \citep{EB20,Watson20} emphasize the importance of NPV for decisions but treat sensitivity and specificity as scientific input and NPV as jointly determined by those and prevalence. The literature on diagnostic testing \citep{PCR_review,Yang20} is explicitly about sensitivity.

MM seem to disagree when they write: ``Medical experts have been cited as believing that the rate of false-negative test findings is at least $0.3$. However, it is not clear whether they have in mind one minus the NPV or one minus test sensitivity." The technical definition of false-negative rate is not in doubt, so the concern is about informal usage. This may be a valid point in general, especially  as conflation of the two corresponds to base-rate neglect, but it did not occur to me with regard to the literature on COVID-19.\footnote{The footnote accompanying the cited sentence links to a news piece that attributes an estimated false-negative rate of $.3$ to \citet{Yang20}. While the news piece has vague language, \citet{Yang20} unambiguously estimates one minus sensitivity.}

\item Asserting bounds on NPV without taking targeting of tests into account may ignore constraining information that could lead to tighter bounds. Specifically, relatively low values of the NPV (i.e., a large fraction of negative test results being false) will be more plausible if one believes the test to be efficiently targeted. But in that same case one would conclude that the constraint $\Pr(C=1|T=1) \geq \Pr(C=1|T=0)$ is far from binding. Therefore, the degree of targeting informally enters NPV-based bounds twice, in different directions, but derivation of \eqref{eq:npv_bounds} does not force the value to be the same in both appearances. Assumption \ref{as:select} is intended to allow for targeting of tests to affect bounds in a disciplined manner. 
\end{itemize}

\section{Empirical Application}\label{sec:empirics}
These bounds are mainly designed to process the information that is available early in a pandemic. For this reason and to highlight some important differences, I illustrate them on MM's data, i.e. daily counts of tests, test results, and fatalities for Illinois, New York, and Italy in March and April, and extend them only to early analysis of subsequent hot spots. For credible application to prevalence data at a much later stage of a pandemic, one would have to take into account multiple testing and other factors.  

The leftmost columns of Tables \ref{tb:1}-\ref{tb:2} present bounds that set $\kappa \in [1,\infty)$, that is, they only restrict the direction of selectivity. I also restrict sensitivity to be in $[.7,.95]$. This is the same sensitivity interval that was used by \cite{Frazier20} in the analysis on which Cornell's Fall reopening plans were based and was supported by the medical literature at the time.\footnote{\cite{UCSF20} base medical advice on a point estimate of $.8$. \cite{Watson20} give $.7$ as ``lower end of current estimates from systematic reviews." \cite{Frazier20} use a preferred point estimate of $.9$. Recall in particular that the data analyzed here were overwhelmingly generated by testing of symptomatic subjects.}   
For comparison, the third column of Table \ref{tb:1} presents bounds that restrict NPV to be in $[.6,.9]$ as well as the direction of selectivity. These replicate MM's Table 2.\footnote{MM's results were independently replicated from their original data. MATLAB code generating all tables is available from the author. To keep the presentation succinct, the tables only show one day per week of data and the last day.\\MM refine these bounds by imposing time monotonicity; that is, prevalence (and therefore both bounds on it) cannot decrease over time. I agree with that restriction and can provide tables that implement it. It is dropped here solely because those tables have many identical rows.} Table \ref{tb:2} extends these same bounds to new data.\footnote{The reader should keep in mind that MM might not have asserted the same bounds on NPV there, so the bounds may not be what they would have proposed. On the other hand, it is a feature that one can (in this author's view) comfortably impose the same bounds on $\pi$ across these contexts.} To facilitate further comparisons, the tables also illustrate the bounds on NPV implied by bounds on sensitivity (second column, corresponding to Proposition \ref{prop:3}) and the converse bounds if one starts from restricting NPV (last column, corresponding to \eqref{eq:bound_sens}).

\begin{table}
\adjustbox{max width=\textwidth, max totalheight=\textheight}{
\begin{tabular}{cccccc} && \multicolumn{2}{c}{\textbf{New Bounds ($\bm{\overline{\kappa}=\infty}$)}} & \multicolumn{2}{c}{\textbf{NPV-based Bounds}} \\
 & \textbf{Date} & \textbf{Prevalence} & \textbf{NPV}& \textbf{Prevalence} & \textbf{Sensitivity} \\[1.5ex]
\multirow{7}{*}{\textbf{Illinois}} & 3/16 & [0.000,0.131] & [0.957,0.995] & [0.000,0.455] & [0.202,0.503]\\
& 3/23 & [0.000,0.186] & [0.936,0.992] & [0.000,0.478] & [0.272,0.599]\\
& 3/30 & [0.000,0.237] & [0.915,0.990] & [0.000,0.500] & [0.332,0.666]\\
& 4/06 & [0.001,0.279] & [0.896,0.987] & [0.001,0.517] & [0.377,0.708]\\
& 4/13 & [0.002,0.297] & [0.887,0.986] & [0.002,0.525] & [0.396,0.724]\\
& 4/20 & [0.003,0.303] & [0.885,0.986] & [0.003,0.527] & [0.402,0.729]\\
& 4/24 & [0.003,0.299] & [0.887,0.986] & [0.004,0.525] & [0.398,0.725] \\ [1.5ex]
\multirow{7}{*}{\textbf{New York}} & 3/16 & [0.000,0.191] & [0.934,0.992] & [0.000,0.480] & [0.279,0.607]\\
& 3/23 & [0.001,0.400] & [0.833,0.980] & [0.002,0.568] & [0.493,0.795]\\
& 3/30 & [0.004,0.527] & [0.749,0.969] & [0.005,0.621] & [0.594,0.854]\\
& 4/06 & [0.007,0.583] & [0.705,0.964] & [0.008,0.645] & [0.633,0.873]\\
& 4/13 & [0.011,0.579] & [0.708,0.964] & [0.012,0.643] & [0.630,0.872]\\
& 4/20 & [0.013,0.554] & [0.728,0.967] & [0.015,0.633] & [0.613,0.864]\\
& 4/24 & [0.015,0.519] & [0.756,0.970] & [0.017,0.618] & [0.588,0.851] \\[1.5ex]
\multirow{7}{*}{\textbf{Italy}} & 3/16 & [0.000,0.290] & [0.891,0.987] & [0.001,0.522] & [0.389,0.718]\\
& 3/23 & [0.001,0.331] & [0.871,0.984] & [0.002,0.539] & [0.430,0.751]\\
& 3/30 & [0.002,0.304] & [0.884,0.986] & [0.002,0.528] & [0.404,0.730]\\
& 4/06 & [0.002,0.263] & [0.903,0.988] & [0.003,0.510] & [0.361,0.693]\\
& 4/13 & [0.003,0.217] & [0.923,0.991] & [0.004,0.491] & [0.309,0.642]\\
& 4/20 & [0.003,0.186] & [0.936,0.992] & [0.005,0.478] & [0.272,0.599]\\
& 4/24 & [0.003,0.169] & [0.943,0.993] & [0.006,0.471] & [0.251,0.572]
 \end{tabular}
}
\caption{Bounds from Proposition \ref{prop:2} ($\overline{\kappa}=\infty$) and \ref{prop:3} assuming that sensitivity is bounded by $[.7,.95]$. For comparison, the right-hand columns are bounds on prevalence and sensitivity are generated by bounding negative predictive value to be in $[.6,.9]$.} \label{tb:1}
\vspace{1cm}

\adjustbox{max width=\textwidth, max totalheight=\textheight}{
\begin{tabular}{cccccc} && \multicolumn{2}{c}{\textbf{New Bounds ($\bm{\overline{\kappa}=\infty}$)}} & \multicolumn{2}{c}{\textbf{NPV-based Bounds}} \\
 & \textbf{Date} & \textbf{Prevalence} & \textbf{NPV}& \textbf{Prevalence} & \textbf{Sensitivity} \\[1.5ex]
\textbf{Arizona} & 8/13 & [0.028,0.258] & [0.906,0.988] & [0.038,0.508] & [0.355,0.687]\\
\textbf{California}& 8/13 & [0.016,0.090] & [0.971,0.996] & [0.037,0.438] & [0.143,0.401]\\
\textbf{Florida} & 8/13& [0.027,0.193] & [0.933,0.992] & [0.043,0.481] & [0.281,0.610]\\
\textbf{Texas}& 8/13 & [0.019,0.173] & [0.941,0.993] & [0.031,0.473] & [0.257,0.580]
 \end{tabular}
}
\caption{Extension of Table \ref{tb:1} to U.S. pandemic hot spots later in 2020.} \label{tb:2}
\end{table} 

The new upper bounds are considerably more restrictive and the lower bounds are slightly less so; in sum, all bounds move down. For the first day of data in Illinois, the comparison is between upper bounds of $13\%$ versus $46\%$. The effect on lower bounds is less pronounced but is not negligible in relative terms; for example, for Illinois on $3/23$, rounding error obscures that the NPV-based lower bound is $1.5$ times the sensitivity-based one. The difference is also substantively meaningful. The new bounds would rather clearly have ruled out speculation of saturation being ``around the corner" at the time. Consider also the implied bounds on the infection fatality rate (i.e., fatalities divided by true infections; IFR henceforth). The most informative NPV-based lower bounds (i.e., evaluated on 4/24) equal $.0003$ for Illinois, $.0013$ for New York, and $.0010$ for Italy. This is close to ``flu-like" numbers that were the subject of speculation at the time, and so it might have appeared that credible partial identification analysis did not exclude such speculations. Yet it did: For the same data, the bounds from Proposition \ref{prop:3} are $.0005$, $.0016$, and $.0026$; for Italy, the lower bound is above $.001$ starting on 3/29. In places where the data admittedly spoke very loudly, these numbers would have cast strong doubts on ``just the flu" conjectures in real time.

Tighter bounds are an unambiguous improvement only if assumptions did not become less credible. I would indeed argue that credibility might, if anything, have improved. A symptom of this is that the NPV-based bounds frequently imply sensitivity below $.7$, whereas the sensitivity-based bounds imply NPV mostly close to, and frequently above, $.9$. The former number seems out of step with expert opinion, including at the time, whereas the latter one would not have raised any eyebrows.\footnote{As part of a recent partial identification analysis, \cite{Sacks20} provide an empirically informed NPV estimate for Indiana of $.995$. This comes with caveats: It corresponds to obviously lower prevalence than in the data considered here, so that MM would presumably have inputted different NPV bounds; also, it operationalizes NPV as test-retest validity. \cite{UCSF20} gives NPV as $.972$ for symptomatic and $.998$ for asymptomatic cases in the Bay Area, though using the sort of point-identifying assumptions that we seek to avoid here.} Also, Table \ref{tb:1} reveals that according to NPV-based bounds, sensitivity increased in New York (the bounds fail to overlap). This might have happened, but forcing it by assumption is arguably against the spirit of weak and credible partial identification assumptions. I would argue that this illustrates methodological qualms, notably the first bullet in Section \ref{sec:NPV}. As discussed there, the implausibilities could be avoided by relaxing prior bounds on NPV to $[.6,1]$. However, the upper bounds would then just be worst-case bounds, all bounds would be wider than the new ones in the present data, and the intriguing feature that MM are able to exclude the crude case fatality rate (i.e., observed fatalities divided by observed cases) as true IFR would be lost. 

Table \ref{tb:2} repeats the exercise for data from subsequent hot spots of the pandemic.\footnote{Test counts and results were retrieved from the COVID tracking project. State populations are U.S. Census estimates for 7/1/19.} I deliberately restrict attention to states with high test yield because it seems that MM calibrated their input bounds to such places, and also to states that were in their first wave because that is what the bounds are designed for. NPV-based bounds continue to allow for very high prevalence but also force test sensitivity to be relatively low. Sensitivity-based upper bounds are again at most half -- often much less -- than their NPV-based counterparts, and other implications of respective bounds are roughly as before.

\begin{table} 
\adjustbox{max width=\textwidth, max totalheight=\textheight}{
\begin{tabular}{cccccccc}& \textbf{Date}& \textbf{Lower Bound} & \multicolumn{5}{c}{\textbf{Upper Bound with}} \\  & & & $\bm{\kappa\geq 5}$ & $\bm{\kappa\geq 3}$ & $\bm{\kappa\geq 2}$ & $\bm{\kappa\geq 1.5}$ & $\bm{\kappa\geq 1}$ \\[1.5ex]
\multirow{7}{*}{\textbf{Illinois}} & 3/16 & 0.000 & 0.036 & 0.048 & 0.0700 & 0.092 & 0.131\\
& 3/23 & 0.000 & 0.054 & 0.071 & 0.102 & 0.132 & 0.186\\
& 3/30 & 0.000 & 0.072 & 0.094 & 0.135 & 0.172 & 0.237\\
& 4/06 & 0.001 & 0.089 & 0.115 & 0.162 & 0.205 & 0.279\\
& 4/13 & 0.002 & 0.097 & 0.125 & 0.175 & 0.220 & 0.297\\
& 4/20 & 0.003 & 0.100 & 0.129 & 0.180 & 0.226 & 0.303\\
& 4/24 & 0.003 & 0.099 & 0.127 & 0.177 & 0.222 & 0.299 \\[1.5ex] 
\multirow{7}{*}{\textbf{New York}}& 3/16 & 0.000 & 0.056 & 0.073 & 0.106 & 0.136 & 0.191\\
& 3/23 & 0.001 & 0.144 & 0.183 & 0.251 & 0.308 & 0.400\\
& 3/30 & 0.004 & 0.221 & 0.274 & 0.360 & 0.427 & 0.527\\
& 4/06 & 0.007 & 0.264 & 0.322 & 0.414 & 0.484 & 0.583\\
& 4/13 & 0.011 & 0.264 & 0.321 & 0.411 & 0.480 & 0.579\\
& 4/20 & 0.013 & 0.248 & 0.302 & 0.389 & 0.457 & 0.554\\
& 4/24 & 0.015 & 0.224 & 0.274 & 0.357 & 0.422 & 0.519\\[1.5ex] 
\multirow{7}{*}{\textbf{Italy}} & 3/16 & 0.000 & 0.093 & 0.120 & 0.170 & 0.214 & 0.290\\
& 3/23 & 0.001 & 0.111 & 0.143 & 0.199 & 0.249 & 0.331\\
& 3/30 & 0.002 & 0.100 & 0.129 & 0.180 & 0.226 & 0.304\\
& 4/06 & 0.002 & 0.084 & 0.108 & 0.153 & 0.193 & 0.263\\
& 4/13 & 0.003 & 0.067 & 0.087 & 0.123 & 0.157 & 0.217\\
& 4/20 & 0.003 & 0.057 & 0.073 & 0.104 & 0.133 & 0.186\\
& 4/24 & 0.003 & 0.051 & 0.066 & 0.094 & 0.120 & 0.169
\end{tabular}
 }
\caption{Change in upper bounds from Table \ref{tb:1} as test selectivity is increasingly restricted.} \label{tb:select}
\vspace{1cm}

\adjustbox{max width=\textwidth, max totalheight=\textheight}{
\begin{tabular}{cccccccc}& \textbf{Date}& \textbf{Lower Bound} & \multicolumn{5}{c}{\textbf{Upper Bound with}} \\  & & & $\bm{\kappa\geq 5}$ & $\bm{\kappa\geq 3}$ & $\bm{\kappa\geq 2}$ & $\bm{\kappa\geq 1.5}$ & $\bm{\kappa\geq 1}$ \\[1.5ex]
\textbf{Arizona} & 8/13 & 0.028 & 0.093 & 0.126 & 0.164 & 0.198 & 0.258\\
\textbf{California} &8/13 & 0.016 & 0.036 & 0.046 & 0.057 & 0.068 & 0.090\\
\textbf{Florida} &8/13 & 0.027 & 0.074 & 0.097 & 0.123 & 0.148 & 0.193\\
\textbf{Texas} &8/13 & 0.019 & 0.060 & 0.081 & 0.106 & 0.130 & 0.173   \end{tabular}
 }
\caption{Change in upper bounds from Table \ref{tb:2} as test selectivity is increasingly restricted.} \label{tb:select_current}
\end{table}

Table \ref{tb:select} shows the effect of increasingly restricting test selectivity through Assumption \ref{as:select}. Reading it from right to left (meant to evoke ``outside in"), it starts with $\underline{\kappa}=1$, i.e. test monotonicity, and progresses through arguably weak restrictions up to $\underline{\kappa}=5$, which is restrictive and may be more in the spirit of a sensitivity parameter. Upper bounds respond strongly. This is reflected in the implied lower bounds on the IFR; for Italy, these increase to (in order) $.0036$, $.0046$, $.0065$, and $.0102$. Of course, these numbers should not be compared to MM's Table 2; to the contrary, MM reach similar conclusions when restricting the proportion of asymptomatic infections. The same exercise but for current hot spots is displayed in Table \ref{tb:select_current}.

The lower bounds are driven by the possibility that all true positives got tested. While this paper focuses on upper bounds, one could also use Assumption \ref{as:select} to refine lower bounds away from that scenario. For the record, restricting $\kappa\leq 100$ [$\kappa\leq 10$] would refine the (last period) lower bound on prevalence for Italy from $.0034$ to $.0047$ [$.0325$]. The upper bound on IFR would be refined from $.1278$ to $.0908$ [$.0132$], a not completely vacuous restriction.

Inference on these bounds has at least two nonstandard aspects: As MM point out, one might think of states and regions as populations of interest rather than samples from meta-populations; at the very least, one might be interested in inference conditionally on the realized population. In that sense, conventional sampling theory might not apply (and is omitted in MM for that reason). However, whether a given subject is tested and the result of that test are realizations of well-defined random variables, opening a clear avenue for statistical inference. Separately, such inference might be complicated if the variables interact (e.g., the marginal tested subject is asymptomatic) and will also involve small probabilities, so that Central Limit Theorem-based approximations (including many forms of the bootstrap) would not apply. Questions like this inform an exciting strand of current research \citep{Rothe20, Toulis20}. However, they are orthogonal to the thrust of this paper and also less salient in the application because sample sizes are so large, and identified intervals so long, that estimation uncertainty is dwarfed by identification issues.\footnote{For a paper-and-pencil computation, approximate the distribution of $(\hat{\tau},\hat{\gamma})$ as independently normal. Then the distribution of estimated bounds follows easily. Because these bounds are ordered by construction, inference would be a direct application of \cite{IM04} and in  particular \citet[][Lemma 3]{Stoye09} and would practically amount to intersecting one-sided confidence intervals. Simple calculations show that the standard errors on $(\hat{\gamma},\hat{\tau})$ would be at least two orders of magnitude smaller than the estimators; consequently, the difference between these confidence intervals and the estimated bounds is at most comparable to the tables' rounding errors.}

\section{Conclusion}
This paper proposes new methods to bound prevalence of a disease from partially identifying data and assumptions. It is to some extent intended as ``think piece" to alert researchers to the possibly fruitful application of partial identification methods. I have no doubt that domain knowledge may inform further, and better, iterations.

Options for refinement and subsequent analysis abound. Users who are comfortable with injecting more prior information may also refine bounds by placing priors on unidentified parameters \citep{Bollinger20}. Users who find the present paper's simple input bounds too coarse but do not want to commit to a prior could also interpolate between these approaches with sets of priors, i.e. in the spirit of Robust Bayesian Analysis \citep{robust_Bayes}. The analysis can also be used as input for decision recommendations. Here, a partially identifying analysis will typically only partially identify optimal actions, and a point valued decision may require to commit to a specific optimality criterion under ambiguity \citep{Manski00,Stoye12}. 

The conceptual innovation is to think of test accuracy as (unknown, not necessarily constant, and possibly not even identifiable) technological parameter and of test selectivity as something that econometric or epidemiological models can speak to. Bounds are therefore constructed with these as starting points, deriving bounds on predictive values by implication and not imposing any prior bound on prevalence in the tested population. In the empirical application, it turns out that some of the more audacious speculations floated at the time contradicted credible partial identification analysis even then. This illustrates the potential utility of such analysis in early stages of a pandemic.

That said, many of this paper's simplifications are a stretch in the current, more advanced stage of the pandemic. For example, one should distinguish between current and past infection and take multiple testing into account. I leave further exploration of such extensions to future work, but would recommend to use restrictions on test sensitivity as primitive of the analysis. Once again, my main hope is to stimulate further research on partial identification in, or in collaboration with, epidemiology.

\bibliography{covid}

\end{document}